\newtheorem{proposition}{Proposition}
\newtheorem{remark}{Remark}
\begin{document}

\title[Sensitivities of prices of Asian options]
{Sensitivities of Asian options in the Black-Scholes model}

\author{Dan Pirjol}
\address
{School of Business\newline
\indent Stevens Institute of Technology\newline
\indent Hoboken, NJ 07030\newline
\indent United States of America}
\email{
dpirjol@gmail.com}

\author{Lingjiong Zhu}
\address
{Department of Mathematics\newline
\indent Florida State University\newline
\indent 1017 Academic Way\newline
\indent Tallahassee, FL-32306\newline
\indent United States of America}
\email{
zhu@math.fsu.edu}

\date{31 January 2018}
\keywords{Asian options, sensitivity analysis, Greeks, approximation.}

\begin{abstract}
We propose analytical approximations for the sensitivities (Greeks) of the Asian 
options in the Black-Scholes model, following from a small maturity/volatility
approximation for the option prices which has the exact short
maturity limit, obtained using large deviations theory. Numerical tests 
demonstrate
good agreement of the proposed approximation with alternative numerical 
simulation results for cases of practical interest.
We also study the qualitative properties of Asian Greeks, including new results 
for Rho, the sensitivity with respect to changes in the risk-free rate, and 
Psi, the sensitivity with respect to the dividend yield. In particular
we show that the Rho of a fixed-strike Asian option and the Psi of a 
floating-strike Asian option can change sign. 
\end{abstract}

\maketitle

\section{Introduction}

The sensitivities of derivatives prices with respect to changes in market
risk factors, also known as Greeks, play an important role in the hedging 
and risk management of these products. This requires a good understanding of 
their properties and qualitative behavior under changes in the model parameters. 
For this reason a great deal of
effort has been put into developing precise computational methods for 
numerical evaluation of sensitivities. The simplest method makes use of finite
difference using exact pricing (when available), or numerical approximations
such as Monte Carlo simulation \citep{PGbook}. Pathwise estimators for the
sensitivities of vanilla
and path-dependent derivatives have been presented in \cite{BrGl}.
More sophisticated methods have 
been also proposed to improve the efficiency and speed of the sensitivities
calculation, using Malliavin calculus \citep{Mall1}, and adjoint (algorithmic)
differentiation \citep{adj}.

While the sensitivities of the vanilla European options are well understood,
the corresponding parameters for exotic derivatives are more challenging, due 
to the absence of analytical expressions for their prices. This is the case
with the Asian options, for which no analytical result is available, even in the
simple Black-Scholes model. 
The pricing of Asian options with continuous-time averaging in this model
is related to the distributional properties for the time integral of geometric 
Brownian motion. This is known in closed form, although the result is 
very complicated and requires a delicate numerical integration, see \cite{Yor}, \cite{Dufresne}. 
These results have been applied to the
pricing of Asian options in \cite{GY}, \cite{CS}, 
see \cite{DufresneReview} for an overview. 
These results are not easy to use for the sensitivity analysis.

In the literature, various methods have been proposed for the 
numerical evaluation of the Asian option prices: the inversion of a Laplace
transform method of \cite{GY}, the spectral method in \cite{Linetsky},
the analytical approximation of the solution of a pricing PDE in \cite{FPP}, 
and the short maturity 
approximation in \cite{PZAsian} and by \cite{ALW}. For practical applications 
the lower analytical bounds in \cite{Curran} based on conditioning on the
geometric average, and the upper bounds of \cite{RS} and based on the 
co-monotonic approximation \cite{VDLDG} give strong constraints on the Asian 
options prices. 

The discrete approximation of the underlying
process as a Continuous Time Markov Chain (CTMC) has been proposed in \cite{Cai}
for pricing Asian options, and has been simplified in \cite{CLL} such that
it requires the inversion of a single Laplace transform. 
The CTMC  method can be also used for pricing Asian options in the Black-Scholes
model by discretizing a Markov process which has the same terminal distribution
as the time integral of the geometric Brownian motion \citep{4}. This method has the advantage
that no Laplace transform inversion is required.

The Greeks of Asian
options under the Black-Scholes model have been studied in \cite{BP,Benh,CK}, 
and a discussion under the local volatility model was presented in \cite{FPP}.
Pathwise and likelihood estimators for the Greeks of the Asian options 
under the BS model have been given in \cite{BrGl}.

The signs of the Asian sensitivities are also relevant for practical
applications. Their signs give the qualitative dependence 
of the Asian option price on the relevant risk factors, even when a more 
precise determination is not available. 
The sign of Gamma determines the PnL of a Delta hedging strategy under 
misspecified volatility, see \cite{EKJS}. Derivatives with positive definite
Gamma have a robustness property under hedging. 
The positivity of Vega implies the monotonicity of the Asian option prices in
the maturity argument, which yields a non-arbitrage argument similar to the
calendar arbitrage for vanilla options.

There have been some rigorous studies for the qualitative behavior of the 
sensitivities for Asian options. 
It was proved in \cite{Carr} that 
the Vega is always positive for the Black-Scholes model for the Asian call 
options. This property does not always hold in a 
binomial tree approximation of the same model, see \cite{Carr}. 
\cite{Carr} also showed that the Asian call option price increases
with respect to the maturity $T$. 

In this paper, we present two results for the sensitivities of the Asian options
in the Black-Scholes model:

(i) We give analytical approximations for the sensitivities of an Asian
option in the Black-Scholes model. They follow from a short maturity asymptotics
of the Asian prices proposed in \cite{PZAsian}, which has the correct small 
time asymptotics derived using large deviations theory for diffusion
processes \citep{Varadhan67}. Both fixed-strike and floating-strike Asian options
are considered.

(ii) We study rigorously the signs of the Asian sensitivities in the 
Black-Scholes model. The positivity of Vega has been proved in \cite{Carr}.
We discuss all other sensitivities, including Rho, the sensitivity 
of the Asian option with respect to the risk-free rate $r$. We show that,
unlike the European option, which has a strictly positive Rho, the Asian
option Rho can change sign. A similar result holds for $\Psi$,
which is the sensitivity of the price with respect to the dividend yield.

The Black-Scholes model and its variations are commonly used for risk 
management of Asian options in practical applications, see for example 
\cite{Geman} for an overview for Asian options on commodities.
In commodities, the underlyings
of these options are futures contracts, and the simplest model is a 
multi-factor Black-type model with time-dependent volatility, chosen such as 
to satisfy the Samuelson rule. This reduces to the Black model for futures 
in the limit of one factor and time-independent volatility.

Asian options on FX rates are also traded in practice. 
The simplest model for FX rates dynamics assumes that they follow geometric 
Brownian motion with a drift proportional 
to the difference of the risk-free rates in the two currencies.
Denoting $X_t$ the FX rate expressed as the number of domestic currency units
corresponding to one foreign currency unit, under the log-normal model one has
$dX_t = \sigma_{\rm FX} X_t dW_t + (r_d-r_f) X_t dt$, where $r_d$ is the risk-free
rate of the domestic currency, $r_f$ is the risk-free rate of the foreign currency, 
and $\sigma_{\rm FX}$ is the volatility. 
The sensitivity to the rates $r_d,r_f$ is mathematically equivalent to Rho and 
Psi for the risk-free rate and the dividend yield, and it is an important risk 
factor for FX Asian options.

We organize the paper as follows. In Section \ref{ctsSection}, we study the 
continuous-time Black-Scholes model, and present a short maturity analytical
approximation for the Greeks of the Asian options. We demonstrate good
numerical agreement with alternative numerical evaluations of the sensitivities.
In Section \ref{Sec:Qualitative}, we review some qualitative results
for the Asian Greeks, and present some new results for Rho and Psi of the 
fixed-strike Asian 
options in Section \ref{ctsFixed} and floating-strike Asian options in Section \ref{ctsFloating}.


\section{Analytical approximations for Asian Greeks}\label{ctsSection}

In the Black-Scholes model, the asset price $(S_{t})_{t\geq 0}$ follows the dynamics
\begin{equation}
dS_{t}=(r-q)S_{t}dt+\sigma S_{t}dW_{t},\qquad S_{0}>0,
\end{equation}
where $W_{t}$ is a standard Brownian motion started from $0$ at time $0$, 
$r$ is the risk-free rate,
$q$ is the dividend yield and $\sigma\geq 0$ is the volatility. 

\subsection{Analytical approximations for the Greeks of fixed-strike Asian options}\label{ShortMat}

Let $T$ be the maturity and $K$ be the fixed-strike price. 
The price of the fixed-strike call and put Asian options are given by
\begin{align}
&C=e^{-rT}\mathbb{E}\left[\left(\frac{1}{T}\int_{0}^{T}S_{t}dt-K\right)^{+}\right],
\\
&P=e^{-rT}\mathbb{E}\left[\left(K-\frac{1}{T}\int_{0}^{T}S_{t}dt\right)^{+}\right].
\end{align}

In \cite{PZAsian} a short maturity  approximation has been proposed 
for the prices of Asian options in the local volatility model. A similar
result was obtained in \cite{ALW}.
We will state the result for the simpler case when the underlying is assumed
to follow the Black-Scholes model.
In the Black-Scholes model, ignoring the interest rate and dividend yield,
by the Brownian scaling property, 
the dependence of the option prices on the volatility and maturity enters only
through the realized variance $\sigma^{2}T$, such that 
the short maturity approximation in \cite{PZAsian} is equivalent to 
the small realized variance limit. 
Thus a more appropriate characterization of our limit is as the 
small maturity/volatility approximation, which is relevant in practice
since the magnitude of the realized variance is often small, even though 
the maturity may not be.

The approximation proposed in \cite{PZAsian} for the Asian option prices
has a form similar to the Black-Scholes formula
\begin{align}\label{CAsian}
&\overline{C} = e^{-rT} [A(T) N(d_1) - K N(d_2) ]\,, \\
\label{PAsian}
&\overline{P} = e^{-rT} [KN(-d_2) - A(T) N(-d_1)]\,,
\end{align}
where 
\begin{equation}
A(T) = S_0 \frac{1}{(r-q)T} (e^{(r-q)T} - 1)\,, 
\end{equation}
and
\begin{equation}
d_{1,2}=\frac{1}{\Sigma_{\rm LN}(K/S_0) \sqrt{T}} 
\left(\log\frac{A(T)}{K} \pm \frac12 
\Sigma^2_{\rm LN}(K/S_0) T \right)\,,
\end{equation}

Requiring that the approximation (\ref{CAsian}), (\ref{PAsian}) has the
correct $T\to 0$ asymptotics determines $\Sigma_{\rm LN}(K/S_0)$.
This is a calculable function given explicitly
in Proposition 18 of \cite{PZAsian}, for the more general case of the
local volatility model, including the Black-Scholes model as a special case 
of the constant volatility (Proposition 12 of \cite{PZAsian}):
\begin{equation}
\Sigma_{\rm LN}^{2}(K/S_{0})=\frac{\sigma^{2}\log^{2}(K/S_{0})}{2\mathcal{J}_{\rm BS}(K/S_{0})},
\end{equation}
where
\begin{equation}
\mathcal{J}_{\rm BS}(K/S_{0})
=
\begin{cases}
\frac{1}{2}\beta^{2}-\beta\tanh(\beta/2) &\text{for $K\geq S_{0}$},
\\
2\xi(\tan\xi-\xi) &\text{for $K\leq S_{0}$},
\end{cases}
\end{equation}
where $\beta\in[0,\infty)$ is the unique solution of the equation 
\begin{equation}
\frac{\sinh(\beta)}{\beta}=\frac{K}{S_{0}}\,,
\end{equation}
and $\xi\in[0,\pi/2)$ is the unique solution of the equation 
\begin{equation}
\frac{\sin(2\xi)}{2\xi}=\frac{K}{S_{0}}\,.
\end{equation}

The function $\Sigma_{\rm LN}(K/S_0)$ is the short maturity limit of the
so-called equivalent log-normal volatility of the Asian option. 
This is defined as that 
volatility of an European option with the same maturity $T$ and strike $K$
as those of the Asian option, which reproduces the Asian option price when
used in the Black-Scholes formula, as in (\ref{CAsian}), (\ref{PAsian}). 
We note that an alternative approximation for the short maturity Asian
option prices can be given in terms of a Bachelier formula, with an 
equivalent normal volatility $\Sigma_{\rm N}(K, S_0)$. The short maturity
asymptotics of $\Sigma_{\rm N}(K, S_0)$ is given in Proposition 18 (ii) of
\cite{PZAsian}. 

The main properties of this function are:

(i) The function $\Sigma_{\rm LN}(K/S_0)$ is linear in $\sigma$. The
ratio $\Sigma_{\rm LN}(K/S_0)/\sigma$ depends only on $K/S_0$. 
The plot of this function is shown in Figure~\ref{Fig:SigLN}.

(ii) The equivalent log-normal volatility of the Asian option in the 
Black-Scholes model has the expansion around the ATM point
\begin{equation}\label{SigLNTaylor}
\Sigma_{\rm LN}(K/S_0) = \frac{\sigma}{\sqrt3} \left( 1 + \frac{1}{10}
x - \frac{23}{2100} x^2 + \frac{1}{3500} x^3 + O\left(x^4\right) \right)\,,
\end{equation}
with $x=\log(K/S_0)$, see Eq.~(55) in \cite{PZAsian}. This is shown
as the dashed curve in Figure~\ref{Fig:SigLN}, and the $x=0$ value is
shown as the red line. From this figure we see that the series expansion 
(\ref{SigLNTaylor}) gives a good approximation for $|x| \leq 2$ which
corresponds to the most commonly encountered strikes in practice. 

\begin{figure}
    \centering
   \includegraphics[width=4.0in]{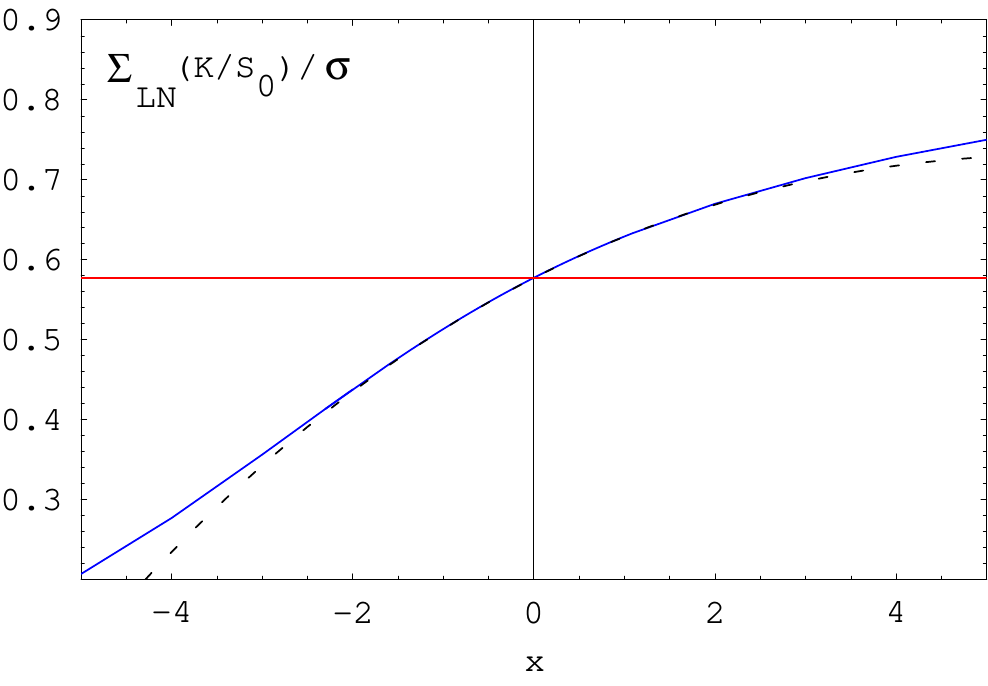}
    \caption{
Plot of $\Sigma_{\rm LN}(K/S_0)/\sigma$ vs $x = \log(K/S_0)$ (blue curve).
The dashed curve shows the series expansion (\ref{SigLNTaylor}) around the 
ATM point $x=0$. The red line at $1/\sqrt{3}$ corresponds to the ATM 
point.}
\label{Fig:SigLN}
 \end{figure}

(iii) As $x\to \infty$, we have $\Sigma_{\rm LN}(K/S_0)/\sigma \to 1$
as 
\begin{equation}
\Sigma_{\rm LN}(K/S_0)= \sigma \left( 1 - \frac{1}{x} \left(\log(2x)-1\right) +O\left(x^{-2}\right)
\right)\,, \quad x\to \infty\,.
\end{equation}

(iv) As $x\to -\infty$, we have $\Sigma_{\rm LN}(K/S_0) \to 0$, as
\begin{equation}
\Sigma_{\rm LN}(K/S_0)= \frac12 \sigma |x| e^{-|x|} \left(1 + O\left(e^{-|x|}\right)\right)
\,, \quad x \to -\infty\,.
\end{equation}

The asymptotics (iii) and (iv) follow from Proposition 13 in \cite{PZAsian}.

The Asian option prices satisfy the put-call parity relation
\begin{equation}\label{PutCall}
C - P = \overline{C} - \overline{P} = e^{-rT} (A(T)-K) \,.
\end{equation}
It is easy to see that this exact result is satisfied by the short 
maturity approximations (\ref{CAsian}).

Using the short maturity approximation (\ref{CAsian}), (\ref{PAsian}) one can
derive simple approximations for the Greeks of the fixed-strike Asian options.
These are not rigorous short maturity limits for the Asian option sensitivities.
However, numerical testing will show that they give reasonably good 
approximations for these
sensitivities by comparing with alternative pricing methods.

\textbf{Delta}

The \textit{Delta} of the approximated Asian call/put option is defined as 
\begin{equation}
\Delta_{\overline{C}} = \left( \frac{\partial \overline{C}}{\partial S_0} \right)\,,\quad
\Delta_{\overline{P}} = \left( \frac{\partial \overline{P}}{\partial S_0} \right)\,.
\end{equation}
From the put-call parity relation (\ref{PutCall}) they are related as
\begin{equation}\label{DeltaPutCall}
\Delta_{\overline{C}} - \Delta_{\overline{P}} = 
e^{-rT} \frac{1}{(r-q)T} \left( e^{(r-q)T}-1\right)\,.
\end{equation}

Using (\ref{CAsian}) we get the explicit result
\begin{equation}\label{DeltaC}
\Delta_{\overline{C}} = e^{-rT} \frac{A(T)}{S_0} \left( N(d_1) - \sqrt{\frac{T}{2\pi}}
e^{-\frac12 d_1^2} \Sigma'_{\rm LN}(K/S_0) \right)\,,
\end{equation}
where
\begin{equation}
\Sigma'_{\rm LN}(K/S_0) := \frac{\partial}{\partial\log(K/S_0)} 
\Sigma_{\rm LN}(K/S_0) \,.
\end{equation}
The first term in (\ref{DeltaC}) gives the contribution due to $S_0$ dependence
of $A(T)$, and the second term is due to the $S_0$ dependence in 
$\Sigma_{\rm LN}(K/S_0)$. For $K/S_0 \sim 1$ around the ATM point, using
(\ref{SigLNTaylor}) we have the expansion
\begin{equation}
\Sigma'_{\rm LN}(K/S_0) := \frac{\sigma}{\sqrt3} \left( \frac{1}{10}
- \frac{23}{1050} x + \frac{3}{3500} x^2 + O(x^3) \right)\,.
\end{equation}

\textbf{Gamma}

The \textit{Gamma} of the approximated Asian call option is
\begin{equation}
\Gamma_{\overline{C}}  : = 
\left( \frac{\partial^2 \overline{C}}{\partial S_0^2} \right)\,.
\end{equation}
From Eq.~(\ref{DeltaPutCall}) one can see that this is the same for
the call and put Asian options.

This can be computed in explicit form and is given by
\begin{align}\label{GammaC}
\Gamma_{\overline{C}} 
&= e^{-rT} \frac{A(T)}{S_0} \frac{1}{\Sigma_{\rm LN}(K/S_0) \sqrt{T} S_0}
\sqrt{\frac{1}{2\pi}} e^{-\frac12 d_1^2}\\
&\qquad\cdot
\left\{\left( 1 + \sqrt{T} \Sigma'_{\rm LN}(K/S_0) \right)
\left( 1 + \frac{\Sigma'_{\rm LN}(K/S_0)}{\Sigma_{\rm LN}(K/S_0)} 
\log\frac{A(T)}{S_0} \right)
+ 
\Sigma_{\rm LN}(K/S_0) \Sigma''_{\rm LN}(K/S_0) T \right\}\,.\nonumber
\end{align}

Here
\begin{equation}
\Sigma''_{\rm LN}(K/S_0) := \frac{\partial^2}{\partial\log(K/S_0)^2} 
\Sigma_{\rm LN}(K/S_0) \,.
\end{equation}
For $K/S_0 \sim 1$ around the ATM point, we have the expansion
\begin{equation}
\Sigma''_{\rm LN}(K/S_0) := \frac{\sigma}{\sqrt3} \left( 
-\frac{23}{1050}  + \frac{3}{1750} x + O\left(x^2\right) \right)\,.
\end{equation}

The first line in the result (\ref{GammaC}) is identical to the well-known 
result for Gamma
in the Black-Scholes model. The second and third terms depend on the
slope and curvature of the equivalent log-normal volatility of the 
Asian option $\Sigma_{\rm LN}(K/S_0)$. These corrections are non-vanishing
in the $T\to 0$ limit.

\textbf{Vega}

The \textit{Vega} of the approximated Asian option is defined as 
\begin{equation}
\mathcal{V}_{\overline{C}}  := 
\left( \frac{\partial \overline{C}}{\partial \sigma} \right)\,,
\end{equation}
and is given by
\begin{equation}
\mathcal{V}_{\overline{C}}  = e^{-rT} A(T) \Sigma_{\rm LN}(K/S_0) \sqrt{T} 
\frac{1}{\sqrt{2\pi} \sigma} e^{-\frac12 d_1^2} \,.
\end{equation}
This is the same for the call and put Asian options, as can be seen from 
the put-call parity relation (\ref{PutCall}). 

\textbf{Rho}

The \textit{Rho} of the approximated Asian call option is defined as 
\begin{equation}
\text{Rho}_{\overline{C}} = \left( \frac{\partial \overline{C}}{\partial r} \right)\,.
\end{equation}
Noting that $\Sigma_{\rm LN}(K/S_0)$ does not depend on $r$, we get
that the Rho of an Asian call option is given by
\begin{equation}\label{RhoC}
\text{Rho}_{\overline{C}} = e^{-rT} T [- A(T) N(d_1) R((r-q)T) + K N(d_2) ]\,,
\end{equation}
where the function $R(x)$ is 
\begin{equation}\label{Rdef}
R(x) = 1 + \frac{1}{x} - \frac{1}{1-e^{-x}} = \frac12 - \frac{1}{12}x + O(x^2)\,.
\end{equation}
The Rho of the approximated Asian put option can be obtained using the put-call
parity (\ref{PutCall})
\begin{equation}
\text{Rho}_{\overline{C}} - \text{Rho}_{\overline{P}} = T e^{-rT} K + e^{-rT} S_0 T f((r-q)T)\,,
\end{equation}
with 
\begin{equation}
f(x):=\frac{1-e^x}{x^2} + \frac{1}{x}\,.
\end{equation}

\subsection{The Greeks of floating-strike Asian options}

The Greeks of the floating-strike Asian options can be obtained from those
of the fixed-strike Asian options using the fixed-floating strike symmetry
of \cite{HW}. They are given below in 
(\ref{DeltaCFloat}), (\ref{DeltaPFloat}),
(\ref{GammaFloat}) and (\ref{VegaFloat}). Combining them with the short
maturity approximations derived above, they can be used
to obtain analytical approximations also for the sensitivities of the
floating-strike Asian options.

\subsection{Numerical tests}

We compare in this Section the analytical approximations for the Greeks of the 
Asian options
against numerical results obtained using alternative implementations from
\cite{3,4,BP}.

\textbf{Scenario 1.}  One first scenario was considered by \cite{3},
see Table 3 in this paper. These authors use the following scenario
\begin{equation}\label{scenario1}
r = 0.1\,, K=100\,, \sigma=0.25\,, T = 0.5 \,.
\end{equation}

In Table~\ref{tab:Delta} we show the results for the sensitivities of an Asian
call option obtained in \cite{3} using three different
methods: finite
differences (FD), pathwise (PW) and likelihood method (Lkhd). For each 
sensitivity we show the three results from Table 3 of \cite{3}, and
the short maturity approximations given above.

For all cases when the three results (FD, PW, Lkhd) are close, the 
short maturity approximation is in reasonably good agreement with the
alternative methods (except for Gamma under the FD method with $S_0=90$
which differs by an order of magnitude; this is most likely a typo).
In some cases FD and PW are close, but differ significantly from Lkhd. In
these cases the short maturity approximation is closer to FD and PW.


\begin{table}[htbp]
  \centering
  \caption{Numerical results for the Greeks of a call Asian option under
the scenario (\ref{scenario1}) and the shown values of $S_0$, from Table 3
of \cite{3}.
The first three rows give the finite
difference (FD), pathwise (PW) and likelihood (Lkhd) methods of \cite{3}. 
The last rows give the short maturity approximations
for the sensitivities of this paper.}
    \begin{tabular}{|l|ccc|}
    \hline
Delta & $S_0=90$ & $S_0=100$ & $S_0=105$ \\
    \hline\hline
FD   & $0.2257\pm 0.001$ & $0.5991 \pm 0.0027$ & $0.7601 \pm 0.0034$ \\
PW   & $0.2255\pm 0.001$ & $0.5990 \pm 0.0027$ & $0.7607 \pm 0.0034$ \\
Lkhd & $0.2111\pm 0.001$ & $0.5880 \pm 0.0027$ & $0.7615 \pm 0.0035$ \\
    \hline
$\Delta_{\overline C}$        & 0.2250 & 0.5978 & 0.7596 \\
\hline
Gamma & $S_0=90$ & $S_0=100$ & $S_0=105$ \\
    \hline\hline
FD   & $0.31134\pm 0.00014$ & $0.03730 \pm 0.00017$ & $0.02712 \pm 0.00012$ \\
PW   & $0.03230\pm 0.00015$ & $0.03684 \pm 0.00017$ & $0.02566 \pm 0.00012$ \\
Lkhd & $0.02282\pm 0.00045$ & $0.02340 \pm 0.00106$ & $0.01887 \pm 0.00130$ \\
    \hline
$\Gamma_{\overline C}$ & 0.03259 & 0.03694  & 0.02730 \\
\hline
Vega & $S_0=90$ & $S_0=100$ & $S_0=105$ \\
    \hline\hline
FD   & $11.07980\pm 0.04956$ & $15.13080 \pm 0.06766$ & $12.17077 \pm 0.05444$ \\
PW   & $11.07959\pm 0.04955$ & $15.13088 \pm 0.06766$ & $12.16922 \pm 0.05444$ \\
Lkhd & $ 8.92731\pm 0.04514$ & $12.82907 \pm 0.08673$ & $10.65001 \pm 0.01039$ \\
    \hline
$\mathcal{V}_{\overline C}$ & 11.0942 & 15.2009 & 12.2608 \\
\hline
Rho & $S_0=90$ & $S_0=100$ & $S_0=105$ \\
    \hline\hline
FD   & $4.66913\pm 0.04188$ & $12.66914 \pm 0.04930$ & $15.99043 \pm 0.04213$ \\
PW   & $4.28253\pm 0.01916$ & $12.01245 \pm 0.05371$ & $15.30865 \pm 0.06845$ \\
Lkhd & $4.64600\pm 0.02078$ & $12.57021 \pm 0.05622$ & $15.81224 \pm 0.07072$ \\
    \hline
$Rho_{\overline C}$ & 4.58395 & 12.5415 & 15.8546 \\
\hline
    \end{tabular}%
  \label{tab:Delta}%
\end{table}%


\textbf{Scenario 2.} We compare next with the benchmark results of \cite{4}.
This paper considers a set of ATM scenarios
\begin{equation}\label{scenario2ATM}
r = 0.05\,, S_0 = K=100\,, \sigma=\{ 0.1, 0.3, 0.5 \}\,, 
\end{equation}
and a set of OTM scenarios
\begin{equation}\label{scenario2OTM}
r = 0.05\,, S_0 =100\,, K=105\,, \sigma= \{ 0.1, 0.3, 0.5 \}\,.
\end{equation}
For each case the option maturity is chosen as
$T = \{1.0, 0.5, 0.25, 1/12\} $. The $T=1$ case is also considered
in \cite{BP}.

The results for the Asian option prices corresponding to the two sets
of scenarios are shown in Table~\ref{tab:ZhenyuATM} (ATM scenarios)
and Table~\ref{tab:ZhenyuOTM} (OTM scenarios). For each case we compare
with the Continuous Time Markov Chain (CMTC) method
of \cite{4} (Table 1 (ATM) and Table 3 (OTM) in \cite{4}).
Additional benchmark values obtained using a
Monte Carlo approach are given in \cite{4,BP}, but are not reproduced here for
reasons of space economy.

The short maturity approximation for the Asian option prices are in reasonably
good agreement with alternative simulation methods for $\sigma<0.5$. The
differences increase with the volatility and maturity, as expected.

We also show the short maturity Asian sensitivities
in Table~\ref{tab:ZhenyuATMGreeks} (ATM scenarios)
and Table~\ref{tab:ZhenyuOTMGreeks} (OTM scenarios).
These are compared against the results from the PDE 
method due to 
\cite{Vecer1,Vecer2} and the Continuous Time Markov Chain (CMTC) method
of \cite{4} in Table 2 (ATM) and Table 4 (OTM) of \cite{4}. 
The agreement is again reasonably good, and becomes worse for larger 
volatility $\sigma$ and maturity $T$.

In conclusion, the short maturity approximation for the Greeks of Asian
options in the Black-Scholes model presented here is in reasonably
good agreement with alternative simulation methods, for sufficiently small
variance $\sigma^2 T$ and small interest rates $rT$.  The interest rates
effects can be taken also into account using the approach
presented in \cite{PZAsian2017}. Under this approach, the approximation
(\ref{CAsian}) is replaced with Eq.~(120), and the equivalent
log-normal volatility $\Sigma_{\rm LN}(K/S_0)$ is replaced with the result
of Proposition 19 in this reference.
Numerical tests presented in Section 6 of
\cite{PZAsian2017} demonstrate good agreement for Asian option prices with the
precise method of \cite{Linetsky}. A similar approximation can be used also to
obtain the Greeks for Asian options.

\begin{table}[htbp]
  \centering
  \caption{Numerical results for Asian option prices under the 
ATM scenario from Table 1 of \cite{4}. The first two columns show 
the results from the Continuous Time Markov Chain (CTMC) method of \cite{4}, 
and from the PDE method of \cite{Vecer1,Vecer2}. The
last column shows the results from the short maturity asymptotics
of Eq.~(\ref{CAsian}).}
    \begin{tabular}{|lc|ccc|}
    \hline
$\sigma$ & $T$ & CTMC & Vecer & $\overline{C}$ \\
    \hline\hline
0.1 & 1   & 3.64116 & 3.64118 & 3.63401 \\
0.1 & 0.5 & 2.31599 & 2.29024 & 2.28768 \\
0.1 & 0.25 & 1.47979 & 1.47814 & 1.47747 \\
0.1 & 1/12 & 0.76807 & 0.77106 & 0.77187 \\
0.1 & 1/252 & 0.14363 & 0.14998 & 0.15009 \\
\hline
0.3 & 1   & 7.94659 & 7.94550 & 7.92675 \\
0.3 & 0.5 & 5.44560 & 5.43886 & 5.43212 \\
0.3 & 0.25 & 3.75502 & 3.74268 & 3.74049 \\
0.3 & 1/12 & 2.09084 & 2.09393 & 2.09389 \\
0.3 & 1/252 & 0.43554 & 0.43871 & 0.44019 \\
\hline
0.5 & 1   & 12.32052 & 12.32063 & 12.30620 \\
0.5 & 0.5 &  8.61829 &  8.61333 &  8.60816 \\
0.5 & 0.25 & 6.02426 &  6.01662 &  6.01486 \\
0.5 & 1/12 & 3.41636 &  3.41826 &  3.41810 \\
0.5 & 1/252 & 0.72532 & 0.72944 & 0.73032 \\
\hline
    \end{tabular}%
  \label{tab:ZhenyuATM}%
\end{table}%


\begin{table}[htbp]
  \centering
  \caption{Numerical results for the Greeks of the 
Asian option prices for the ATM scenarios (\ref{scenario2OTM}) of \cite{4}. 
We compare the short maturity asymptotics results
against the alternative methods (CTMC \citep{4} and \cite{Vecer1,Vecer2}) 
in Table 2 of \cite{4}. }
    \begin{tabular}{|c|ccc|ccc|ccc|}
    \hline
\multicolumn{1}{|c|}{} & \multicolumn{3}{|c|}{Delta} 
                       & \multicolumn{3}{|c|}{Gamma} 
                       & \multicolumn{3}{|c|}{Vega} \\
\hline
$T$            & CTMC & Vecer & $\Delta_{\overline{C}}$ 
               & CTMC & Vecer & $\Gamma_{\overline{C}}$ 
               & CTMC & Vecer & $\mathcal{V}_{\overline{C}}$ \\
    \hline\hline
\multicolumn{10}{|c|}{$\sigma=0.1$} \\
\hline
1   & 0.6646 & 0.6591 & 0.6599 
    & 0.0635 & 0.0607 & 0.0610      
    & 19.9635 & 20.2806 & 20.1767 \\
0.5 & 0.6146 & 0.6255 & 0.6189 
    & 0.0884 & 0.0833 & 0.0920      
    & 14.9495 & 15.2858 & 15.2463 \\
0.25& 0.5788 & 0.5971 & 0.5866 
    & 0.1071 & 0.1124 & 0.1342      
    & 11.0037 & 11.1182 & 11.1430 \\
1/12& 0.5373 & 0.5732 & 0.5512 
    & 0.2375 & 0.2370 & 0.2372      
    & 6.2215 & 6.5543 & 6.5764 \\
\hline
\multicolumn{10}{|c|}{$\sigma=0.3$} \\
\hline
1   & 0.5723 & 0.5722 & 0.5704 
    & 0.0121 & 0.0155 & 0.0223      
    & 21.8410 & 21.8961 & 21.8724 \\
0.5 & 0.5552 & 0.5555 & 0.5532 
    & 0.0327 & 0.0319 & 0.0322      
    & 15.8379 & 15.8794 & 15.8712 \\
0.25& 0.5402 & 0.5435 & 0.5392 
    & 0.042 & 0.0427  & 0.0459      
    & 11.2979 & 11.3607 & 11.3686 \\
1/12& 0.527 & 0.532 & 0.5235 
    & 0.0743 & 0.0794 & 0.0799      
    & 6.4603 & 6.6337 & 6.6205 \\
\hline
\multicolumn{10}{|c|}{$\sigma=0.5$} \\
\hline
1   & 0.5672 & 0.5668 & 0.5660 
    & 0.0065 & 0.0072 & 0.0135      
    & 21.7296 & 21.7987 & 21.8736 \\
0.5 & 0.5513 & 0.5514 & 0.5500 
    & 0.0190 & 0.0145 & 0.0194      
    & 15.8230 & 15.8096 & 15.8715 \\
0.25& 0.5382 & 0.5392 & 0.5369 
    & 0.0219 & 0.0216  & 0.0277      
    & 11.3402 & 11.3593 & 11.3687 \\
1/12& 0.5238 & 0.5266 & 0.5222 
    & 0.0435 & 0.0457 & 0.0481     
    & 6.5693 & 6.6187 & 6.6205 \\
\hline
    \end{tabular}%
  \label{tab:ZhenyuATMGreeks}%
\end{table}%

\begin{table}[htbp]
  \centering
  \caption{Numerical results for Asian option prices under the 
OTM scenarios (\ref{scenario2OTM}) of \cite{4}, comparing with the 
Continuous Time Markov
Chain (CTMC) method of \cite{4} and with the PDE method of Vecer,
see Table 3 in \cite{4}. The
last column shows the results from the short maturity asymptotics
of Eq.~(\ref{CAsian}).}
    \begin{tabular}{|lc|ccc|}
    \hline
$\sigma$ & $T$ & CTMC & Vecer & $\overline{C}$ \\
    \hline\hline
0.1 & 1   & 1.31217 & 1.31063 & 1.30304 \\
0.1 & 0.5 & 0.44555 & 0.42856 & 0.42658 \\
0.1 & 0.25 & 0.09059 & 0.09362 & 0.09314 \\
0.1 & 1/12 & 0.00145 & 0.00144 & 0.00135 \\
\hline
0.3 & 1   & 5.74781 & 5.758607 & 5.73937 \\
0.3 & 0.5 & 3.33363 & 3.33079 & 3.32412 \\
0.3 & 0.25 & 1.78915 & 1.77951 & 1.77730 \\
0.3 & 1/12 & 0.48821 & 0.48819 & 0.48797 \\
\hline
0.5 & 1   & 10.26652 & 10.26642 & 10.2519 \\
0.5 & 0.5 &  6.53065 &  6.52547 &  6.52029 \\
0.5 & 0.25 & 3.97055 &  3.96283 &  3.96104 \\
0.5 & 1/12 & 1.55832 &  1.54391 &  1.54372 \\
\hline
    \end{tabular}%
  \label{tab:ZhenyuOTM}%
\end{table}%


\begin{table}[htbp]
  \centering
  \caption{Numerical results for the Greeks of the 
Asian option prices for the OTM scenarios (\ref{scenario2OTM}) of \cite{4}. 
We compare the short maturity asymptotics results
against the alternative methods (CTMC and Vecer) in Table 4 of \cite{4}. }
    \begin{tabular}{|c|ccc|ccc|ccc|}
    \hline
\multicolumn{1}{|c|}{} & \multicolumn{3}{|c|}{Delta} 
                       & \multicolumn{3}{|c|}{Gamma} 
                       & \multicolumn{3}{|c|}{Vega} \\
\hline
$T$ & CTMC & Vecer & $\Delta_{\overline{C}}$ 
    & CTMC & Vecer & $\Gamma_{\overline{C}}$ 
    & CTMC & Vecer & $\mathcal{V}_{\overline{C}}$ \\
    \hline\hline
\multicolumn{10}{|c|}{$\sigma=0.1$} \\
\hline
1   & 0.3457 & 0.3444 & 0.3415 
    & 0.0613 & 0.0645 & 0.0629      
    & 20.5678 & 21.1117 & 21.0090 \\
0.5 & 0.1992 & 0.1929 & 0.1904
    & 0.0643 & 0.0645 & 0.0665      
    & 10.7848 & 11.2099 & 11.1330 \\
0.25& 0.0890 & 0.0734 & 0.0724 
    & 0.0522 & 0.0512 & 0.0478      
    & 4.0080 & 4.0693 & 4.0090 \\
1/12& 0.0034 & 0.0031 & 0.0027 
    & 0.0522 & 0.0512 & 0.0050      
    & 0.1432 & 0.1418 & 0.1397 \\
\hline
\multicolumn{10}{|c|}{$\sigma=0.3$} \\
\hline
1   & 0.4631 & 0.4631 & 0.4619 
    & 0.0218 & 0.0220 & 0.0228      
    & 22.4796 & 22.5724 & 22.5485 \\
0.5 & 0.3998 & 0.3993 & 0.3981
    & 0.0309 & 0.0310 & 0.0316      
    & 15.7093 & 15.7346 & 15.7296 \\
0.25& 0.3252 & 0.3238 & 0.3229 
    & 0.0405 & 0.0412  & 0.0416      
    & 10.3772 & 10.4244 & 10.4150 \\
1/12& 0.1871 & 0.1848 & 0.1812 
    & 0.0535 & 0.0537 & 0.0529      
    & 4.3964 & 4.4357 & 4.4290 \\
\hline
\multicolumn{10}{|c|}{$\sigma=0.5$} \\
\hline
1   & 0.5020 & 0.5019 & 0.5011 
    & 0.0130 & 0.0132 & 0.0138      
    & 22.0044 & 22.4501 & 22.5303 \\
0.5 & 0.4577 & 0.4573 & 0.4565 
    & 0.0189 & 0.0190 & 0.0195      
    & 16.0843 & 16.0843 & 16.1184 \\
0.25& 0.4060 & 0.4573 & 0.4045 
    & 0.0267 & 0.0264  & 0.0271      
    & 11.1946 & 11.2161 & 11.2235 \\
1/12& 0.3018 & 0.3023 & 0.2994 
    & 0.0412 & 0.0416 & 0.0420     
    & 5.7826 & 5.8367 & 5.8366 \\
\hline
    \end{tabular}%
  \label{tab:ZhenyuOTMGreeks}%
\end{table}%

\section{Qualitative Behavior of Asian Greeks}\label{Sec:Qualitative}

Qualitative behavior of the sensitivities for Asian options 
has been studied rigorously for certain Greeks in the literature.
Let us recall that in the Black-Scholes model, the fixed-strike Asian call and put option 
prices are given by:
\begin{align}
&C(K,T,\sigma,r,q)=e^{-rT}\mathbb{E}
\left[\left(\frac{1}{T}\int_{0}^{T}S_{0}e^{(r-q-\frac{1}{2}\sigma^{2})t+\sigma W_{t}}dt-K\right)^{+}\right],\label{CBS}
\\
&P(K,T,\sigma,r,q)=e^{-rT}\mathbb{E}\left[\left(K
-\frac{1}{T}\int_{0}^{T}S_{0}e^{(r-q-\frac{1}{2}\sigma^{2})t+\sigma W_{t}}dt\right)^{+}\right].\label{PBS}
\end{align}
The Greeks are defined as the partial derivatives of $C$ and $P$ in \eqref{CBS} and \eqref{PBS}
with respect to the model parameters.

In the Black-Scholes model, the floating-strike Asian call and put option 
prices are given by:
\begin{align}
&C_f(\kappa,T,\sigma,r,q):=\mathbb{E}\left[\left(\kappa S_{0}e^{-qT-\frac{1}{2}\sigma^{2}T+\sigma W_{T}}
-\frac{1}{T}\int_{0}^{T}S_{0}e^{r(t-T)-qt-\frac{1}{2}\sigma^{2}t+\sigma W_{t}}dt\right)^{+}\right],\label{CfBS}
\\
&P_f(\kappa,T,\sigma,r,q):=e^{-rT}\mathbb{E}\left[\left(\frac{1}{T}\int_{0}^{T}S_{0}e^{r(t-T)-qt-\frac{1}{2}\sigma^{2}t+\sigma W_{t}}dt
-\kappa S_{0}e^{-qT-\frac{1}{2}\sigma^{2}T+\sigma W_{T}}\right)^{+}\right].\label{PfBS}
\end{align}
The Greeks are defined as the partial derivatives of $C_f$ and $P_f$ in \eqref{CfBS} and \eqref{PfBS}
with respect to the model parameters.

\cite{HW} proved an equivalence relation between the 
fixed- and floating-striked Asian options
under the Black-Scholes model:
\begin{align}
&C_{f}(\kappa,T,\sigma,r,q)=P(\kappa S_{0},T,\sigma,q,r),
\\
&P_{f}(\kappa,T,\sigma,r,q)=C(\kappa S_{0},T,\sigma,q,r).
\end{align}
We will use these relations in the next section to express 
sensitivities of the
floating-strike Asian options in terms of those of the fixed-strike
Asian options.

\subsection{Delta, Gamma and Vega of the Asian options}

The signs of Delta and Gamma for Asian options can be easily determined. 
For fixed-strike Asian options, from Eq. \eqref{CBS} and Eq. \eqref{PBS}, 
it is clear that the Asian call (put) option price increases (decreases) 
convexly with respect to spot price $S_{0}$.
Therefore Delta is positive for the call option and negative for the put option,
and Gamma is always positive.

For floating-strike Asian options, it is clear that both $C_f$ and $P_f$ are 
linear in $S_{0}$ and we have
\begin{align}\label{DeltaCFloat}
&\Delta_{C,f}=\mathbb{E}\left[\left(\kappa e^{-qT-\frac{1}{2}\sigma^{2}T+\sigma W_{T}}
-\frac{1}{T}\int_{0}^{T}e^{r(t-T)-qt-\frac{1}{2}\sigma^{2}t+\sigma W_{t}}dt\right)^{+}\right]
\\
&\qquad= \frac{1}{S_0} C_f(\kappa, T, \sigma, r, q) = 
\frac{1}{S_0} P(\kappa S_0, T, \sigma, q, r)\,, \nonumber \\
\label{DeltaPFloat}
&\Delta_{P,f}=e^{-rT}\mathbb{E}\left[\left(\frac{1}{T}\int_{0}^{T}e^{r(t-T)-qt-\frac{1}{2}\sigma^{2}t+\sigma W_{t}}dt
-\kappa e^{-qT-\frac{1}{2}\sigma^{2}T+\sigma W_{T}}\right)^{+}\right],\\
&\qquad = \frac{1}{S_0} P_f(\kappa, T, \sigma, r, q) =
\frac{1}{S_0} C(\kappa S_0, T, \sigma, q, r) \nonumber\,.
\end{align}
Thus the Delta of floating-strike Asian options are positive.
Their Gamma is zero:
\begin{equation}\label{GammaFloat}
\Gamma_{C,f}=\Gamma_{P,f}=0.
\end{equation}

The sign for the Vega is surprisingly difficult to analyze. 
\cite{Carr} showed rigorously that 
the Vega is always positive for the Black-Scholes model for the fixed-strike Asian call 
options, see \cite{BY} for an alternative proof. 
By the put-call parity, the Vega is positive for the fixed-strike Asian put 
option.
By the equivalence relation between fixed-strike and floating-strike Asian 
options \citep{HW}, one has the relation between the Vega sensitivities of
these two types of Asian options
\begin{equation}\label{VegaFloat}
\mathcal{V}_{C,f}(\kappa,T,\sigma,r,q) = \mathcal{V}_C(\kappa S_0,T,\sigma,q,r)\,.
\end{equation}
We conclude that the Vega is also positive for the floating-strike Asian 
options.

We summarize what is known about the signs of Delta, Gamma, Vega in 
Table \ref{KnownTable}.

\begin{table}[htb]
\centering 
\caption{Summary of the signs of Delta, Gamma, Vega for Asian options
in the Black-Scholes model. }
\begin{tabular}{|c||c|c|c|} 
\hline 
Option & sign of Delta & sign of Gamma & sign of Vega
\\
\hline
\hline
fixed-strike Asian call & $+$ & $+$ & $+$
\\
\hline
fixed-strike Asian put & $-$ & $+$ & $+$ 
\\
\hline
floating-strike Asian call & $+$ & $0$ & $+$
\\
\hline
floating-strike Asian put & $+$ & $0$ & $+$
\\
\hline 
\end{tabular}
\label{KnownTable} 
\end{table}

To our knowledge, there is no rigorous study in the literature of 
the signs of the Greek letters Rho and Psi, the sensitivities
with respect to the risk-free rate $r$ and the dividend yield $q$ 
for the Asian options for both fixed- and floating-strikes. 
This will be our focus in Section \ref{ctsFixed} and Section \ref{ctsFloating}. 

We prove rigorously that in the Black-Scholes model 
Rho for the fixed-strike Asian call option can be both positive and 
negative, and Rho for the fixed-strike Asian put option is always negative.
Rho is always positive for the floating-strike Asian call option,
and for the floating-strike Asian put option. Similar results are obtained 
for Psi, the sensitivity with respect to the dividend yield.
We summarize our results in Table \ref{SummaryTable}.
The rigorous derivations will be given in Section \ref{ctsFixed} and Section \ref{ctsFloating}.

\begin{table}[htb]
\centering 
\caption{Summary of the sign of Rho and Psi for Asian options
in the Black-Scholes model. }
\begin{tabular}{|c||c|c|} 
\hline 
Option & sign of Rho & sign of Psi \\
\hline
\hline
fixed-strike Asian call & $+/-$ & $-$
\\
\hline
fixed-strike Asian put & $-$ & $+$ 
\\
\hline
floating-strike Asian call & $+$ & $-$
\\
\hline
floating-strike Asian put & $-$ & $+/-$ 
\\
\hline 
\end{tabular}
\label{SummaryTable} 
\end{table}

\subsection{Rho and Psi for Fixed-strike Asian options}\label{ctsFixed}

We have the following results for Rho and Psi of fixed-strike Asian options in the
Black-Scholes model. 

\begin{proposition}\label{prop:1}
(i) The Rho for the fixed-strike Asian put options is always negative.

(ii) The Rho for the fixed-strike Asian call options is sometimes negative and sometimes positive.

(iii) The Psi for the fixed-strike Asian put options is always positive.

(iv) The Psi for the fixed-strike Asian call options is always negative.
\end{proposition}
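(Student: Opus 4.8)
The plan is to differentiate the expectations in \eqref{CBS} and \eqref{PBS} directly with respect to $r$ and $q$, interchange differentiation and expectation (justified by dominated convergence, since the payoffs are Lipschitz in the averaged price and the geometric Brownian integral has all moments), and then inspect the sign of the resulting integrand. Write $A_T := \frac{1}{T}\int_0^T S_0 e^{(r-q-\frac12\sigma^2)t + \sigma W_t}\,dt$ for the running average. For the put, $P = e^{-rT}\mathbb{E}[(K - A_T)^+]$, so
\begin{equation}
\frac{\partial P}{\partial r} = -T e^{-rT}\mathbb{E}[(K-A_T)^+] - e^{-rT}\mathbb{E}\!\left[\mathbf{1}_{\{A_T < K\}}\,\frac{\partial A_T}{\partial r}\right],
\end{equation}
and $\frac{\partial A_T}{\partial r} = \frac{1}{T}\int_0^T t\, S_0 e^{(r-q-\frac12\sigma^2)t+\sigma W_t}\,dt > 0$ pathwise. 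Both terms are manifestly $\le 0$, and strictly negative unless $P=0$, which gives (i). The same computation for Psi of the put: $\frac{\partial A_T}{\partial q} = -\frac{1}{T}\int_0^T t\, S_0 e^{(\cdots)t+\sigma W_t}\,dt < 0$ pathwise, and there is no $-Te^{-rT}$ prefactor term (the discount factor does not depend on $q$), so $\frac{\partial P}{\partial q} = -e^{-rT}\mathbb{E}[\mathbf{1}_{\{A_T<K\}}\frac{\partial A_T}{\partial q}] \ge 0$, strictly unless $P=0$; this gives (iii). Part (iv) is the analogue for the call: $\frac{\partial C}{\partial q} = e^{-rT}\mathbb{E}[\mathbf{1}_{\{A_T>K\}}\frac{\partial A_T}{\partial q}] \le 0$, strictly unless $C=0$ — here the only subtlety is that $e^{-rT}$ is $q$-independent, so no competing positive term appears, and the sign is unambiguous.

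Part (ii) — that Rho of the fixed-strike call changes sign — is the only one requiring real work, since from $\frac{\partial C}{\partial r} = -Te^{-rT}\mathbb{E}[(A_T-K)^+] + e^{-rT}\mathbb{E}[\mathbf{1}_{\{A_T>K\}}\frac{\partial A_T}{\partial r}]$ the two terms have opposite signs and one must exhibit parameter regimes where each dominates. For one sign I would take a deep in-the-money call (small $K$ relative to $A(T) = S_0\frac{e^{(r-q)T}-1}{(r-q)T}$): then $\mathbf{1}_{\{A_T>K\}} \approx 1$ with high probability, so $\frac{\partial C}{\partial r} \approx e^{-rT}\mathbb{E}[\frac{\partial A_T}{\partial r}] - Te^{-rT}\mathbb{E}[A_T - K]$, and since $\mathbb{E}[\frac{\partial A_T}{\partial r}] = \frac{1}{T}\int_0^T t\, S_0 e^{(r-q)t}\,dt$ is of order $S_0 T$ while the subtracted term $\sim T(A(T)-K) \sim T\,S_0$ is comparable, a clean choice is to let $K \to 0$ and check that the limiting expression $e^{-rT}\frac{S_0}{T}\int_0^T t e^{(r-q)t}dt - T e^{-rT}(A(T)S_0\text{-part})$... — more transparently, take the put-call parity relation (\ref{PutCall}), $C - P = e^{-rT}(A(T)-K)$, differentiate in $r$, and use (i): $\frac{\partial C}{\partial r} = \frac{\partial P}{\partial r} + \frac{\partial}{\partial r}[e^{-rT}(A(T)-K)]$. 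For $K$ large (deep OTM put side) $\frac{\partial P}{\partial r}\to 0$ fast and the explicit term $\frac{\partial}{\partial r}[e^{-rT}(A(T)-K)] = e^{-rT}[-T(A(T)-K) + S_0\frac{\partial}{\partial r}(\frac{e^{(r-q)T}-1}{(r-q)T})]$ is dominated by $+Te^{-rT}K > 0$ — wait, that is OTM for the put but then the call is deep ITM; so for $K$ small, $A(T) - K > 0$ large, and the $-Te^{-rT}(A(T)-K)$ piece can be made negative. The cleanest route: pick $K$ small (so call deep ITM) to get $\frac{\partial C}{\partial r} < 0$, and pick $K$ large (call deep OTM) where $C$ itself is tiny but positive and one estimates $\frac{\partial C}{\partial r} > 0$ directly from $e^{-rT}\mathbb{E}[\mathbf{1}_{\{A_T>K\}}\frac{\partial A_T}{\partial r}]$ dominating $Te^{-rT}\mathbb{E}[(A_T-K)^+]$, because on the rare event $\{A_T > K\}$ one has $\frac{\partial A_T}{\partial r} \asymp T\cdot A_T$ which beats the extra factor near the threshold. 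I would make this rigorous via an explicit sub-scenario, e.g. $r - q = 0$ (so $A(T) = S_0$, $\frac{\partial A_T}{\partial r}$ still positive) combined with either $K \ll S_0$ or $K \gg S_0$, reducing everything to integrals of the lognormal density.

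The main obstacle is part (ii), specifically producing a rigorous, self-contained argument that $\frac{\partial C}{\partial r} > 0$ somewhere: the two competing terms are both $O(Te^{-rT}S_0)$ in magnitude at generic strikes, so a naive bound is inconclusive and one genuinely needs an asymptotic regime. I expect the deep-OTM limit $K \to \infty$ (with all other parameters fixed) to work, using the Laplace-type asymptotics for $\mathbb{P}(A_T > K)$ and $\mathbb{E}[(A_T-K)^+]$ of the integral of geometric Brownian motion (the tail is driven by the running maximum, $\log A_T$ having Gaussian-like right tail), to show the ratio $\mathbb{E}[\mathbf{1}_{\{A_T>K\}}\partial_r A_T] / \mathbb{E}[(A_T-K)^+]$ diverges, hence $\frac{\partial C}{\partial r} > 0$ for $K$ large; and the deep-ITM / small-$K$ limit to show $\frac{\partial C}{\partial r} < 0$ there. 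Continuity in $K$ then yields a sign change. An alternative, possibly cleaner, execution uses put-call parity throughout so that only $\frac{\partial P}{\partial r}$ (already shown $\le 0$ and exponentially small for large $K$) plus an explicit elementary function of $r,q,T,K,S_0$ needs to be signed, which sidesteps the lognormal tail estimate entirely and is the route I would try first.
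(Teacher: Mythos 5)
Parts (i), (iii) and (iv) of your argument are correct and essentially coincide with the paper's: pathwise monotonicity of $A_T$ in $r$ and $q$ together with the sign of the discount factor (the paper phrases this as monotonicity rather than differentiating under the expectation, but the content is identical). Your route to the \emph{negative} sign in (ii) --- put--call parity plus the already-established $\partial P/\partial r\le 0$, then a parameter regime where the explicit term $\partial_r\bigl[e^{-rT}(A(T)-K)\bigr]$ is negative --- is also sound and close in spirit to the paper's (the paper varies $r$, using $F(0)>0$ and $F(\infty)=0$; you vary $K$; both work).

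The genuine gap is the \emph{positive}-sign half of (ii), which you correctly identify as the crux but do not close. Two problems. First, the put--call parity route you say you would try first cannot produce the positive sign: for $K$ large the put is deep \emph{in} the money (not out), so $\partial P/\partial r \approx -Te^{-rT}(K-A(T)) - e^{-rT}\partial_r A(T)$ is large and negative and cancels the explicit parity term to leading order, leaving $\partial C/\partial r\to 0$ with undetermined sign. Second, your direct route via $K\to\infty$ tail asymptotics requires showing $\mathbb{E}[(TA_T-B_T)1_{A_T>K}] = o\bigl(TK\,\mathbb{P}(A_T>K)\bigr)$, i.e.\ conditional concentration of the path functional $\int_0^T(T-t)S_t\,dt\,/\int_0^T S_t\,dt$ given the rare event $\{A_T>K\}$ --- a sharp (not merely logarithmic) large-deviations statement for the integral of geometric Brownian motion that you have not supplied and that is substantially harder than everything else in the proposition. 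The paper sidesteps this with two inexpensive ideas you are missing: (a) drop the indicator from the negative term to get the lower bound $\text{Rho}_C \ge e^{-rT}\bigl(TK\,\mathbb{P}(A_T>K) - \mathbb{E}[TA_T-B_T]\bigr)$, whose second term is an explicit elementary function of the parameters (equal to $S_0T/2$ when $r=q$); and (b) bound $\mathbb{P}(A_T>K)\ge\mathbb{P}(G_T>K)$ by the AM--GM inequality, where the geometric average $G_T$ is exactly lognormal, reducing the whole question to whether $\sup_{K}K\,\mathbb{P}(G_T>K)>S_0/2$ --- a one-dimensional computation, checked numerically in Remark~\ref{NumRemark} and handled analytically for small $\sigma$ in Appendix~\ref{ctsAlt}. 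Note also that the positivity region established there is near-the-money ($S_0/2<K<S_0$ at small volatility), not the deep-out-of-the-money regime $K\to\infty$ that you target.
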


\begin{proof}
For the Asian put options,
\begin{equation}
P(K,T,\sigma,r,q)=e^{-rT}\mathbb{E}\left[\left(K-\frac{1}{T}\int_{0}^{T}e^{(r-q-\frac{1}{2}\sigma^{2})t+\sigma W_{t}}dt\right)^{+}\right].
\end{equation}
Therefore, it is clear that 
$\left(K-\frac{1}{T}\int_{0}^{T}e^{(r-q-\frac{1}{2}\sigma^{2})t+\sigma W_{t}}dt\right)^{+}$ is decreasing in $r$.
Trivially, the discount factor $e^{-rT}$ is decreasing in $r$ and increasing
in $q$. Hence, $P(r,q,\sigma,K,T)$ is decreasing in $r$ and increasing in $q$.
We conclude that Rho for the Asian put options is always negative, and Psi is 
always positive. This proves (i) and (iii).

A similar argument gives that the Psi of the Asian call option is always
negative, which proves (iv).
However the analysis for Rho of the Asian call option is much more complicated.
By put-call parity,
\begin{equation}
F(r)=C(K,T,\sigma,r,q)-P(K,T,\sigma,r,q)
=e^{-rT}\mathbb{E}\left[\frac{1}{T}\int_{0}^{T}S_{t}dt-K\right]
=e^{-rT}S_{0}\frac{e^{(r-q)T}-1}{(r-q)T}-e^{-rT}K.
\end{equation}
Let us assume that
\begin{equation}
F(0)=S_{0}\frac{1-e^{-qT}}{qT}-K>0.
\end{equation}
Note that $\lim_{r\rightarrow\infty}F(r)=0$. 
Therefore $F(r)$ must be decreasing in $r$ on some interval. 
Since $\partial C/\partial r=\partial F/\partial r+\partial P/\partial r$
and $\partial P/\partial r$ is always negative, we conclude
that Rho is negative for the Asian call options for some values of the
model and option parameters.

We will show that Rho can also become positive. For this it is convenient to 
write this sensitivity in a more explicit form as (we denote here and below
the instrument type by a subscript $C,P$ on Rho):
\begin{align}\label{5}
\text{Rho}_C &= e^{-rT} \left[ -T(A_T - K) 1_{A_T\geq K}] + 
   \partial_r \mathbb{E}[1_{A_T\geq K} (A_T - K)] \right] \\
&= e^{-rT} \left( -T \mathbb{E}[(A_T - K) 1_{A_T>K}] + 
                      \mathbb{E}[B_T 1_{A_T>K}]\right) \nonumber \\
&= e^{-rT} \left( T K \mathbb{P}(A_T \geq K)  - 
   \mathbb{E}[(T A_T - B_T) 1_{A_T \geq K}]\right)\,, \nonumber
\end{align}
with
\begin{equation}
A_T:=\frac{1}{T}\int_{0}^{T}S_{t}dt\,,
\qquad
B_T := \frac{1}{T} \int_0^T t S_t dt \,.
\end{equation}

This relation is the continuous time analog of the pathwise estimator for Rho 
obtained in Proposition 5 of \cite{BrGl} for an Asian option with discrete-time
averaging, see Eq. (45) in this paper. 
We fixed a small typo in their Eq. (45).

The second term in (\ref{5}) is positive, as one can see by writing
it as
\begin{equation}\label{6}
T A_T - B_T = \frac{1}{T} \int_0^T (T-t) S_t dt> 0 \,.
\end{equation}
The sign of Rho$_C$ is determined by the relative size of the two terms. 

The expression \eqref{5} can help us to show that $\text{Rho}_{C}$ can take both negative and
positive values. First, observe that
\begin{equation}
\lim_{r\rightarrow\infty}\mathbb{P}(A_{T}>K)=1,
\qquad
\lim_{r\rightarrow\infty}\mathbb{E}[(TA_{T}-B_{T})1_{A_{T}>K}]=\infty.
\end{equation}
Therefore, we conclude that for any fixed $K,S_{0},q,T,\sigma$, $\text{Rho}_C$ is 
always negative for sufficiently large $r$.

Next, assume $r=q=\sigma=0$, and let $S_{0}>K$. 
Then, we have $A_{T}=S_{0}$ and 
\begin{equation}
TA_{T}-B_{T}=\frac{1}{T}\int_{0}^{T}(T-t)S_{0}dt=S_{0}\frac{T}{2}\,.
\end{equation}
Thus we have
\begin{equation}
\text{Rho}_C=TK-S_{0}\frac{T}{2}>0,
\end{equation}
if we have $2K>S_{0}>K$. Therefore, for some parameters, the $\text{Rho}_{C}$ for the 
Asian call option is positive. 

The restriction above, $\sigma=0$, is a little bit too strong. 
Indeed, we can relax this assumption and argue as follows.
Rho$_C$ is bounded from below as
\begin{equation}\label{RhoLowerBound}
\text{Rho}_C \geq e^{-rT} 
 \left( T K \mathbb{P}(A_T>K)  - \mathbb{E}[T A_T - B_T]\right).
\end{equation}
Let us study how the right hand side of the equation \eqref{RhoLowerBound} depends on on $K$. 
The first term is positive, 
and the second term contributes with a negative sign, see (\ref{6}), and 
is $K$ independent. Its expectation is finite
\begin{equation}
\mathbb{E}[T A_T - B_T]= \frac{S_0}{T} 
\left( \frac{1}{(r-q)^2} e^{(r-q)T} - \frac{1}{(r-q)^2} (1 + (r-q)T) \right)\,.
\end{equation}
for $r-q\neq 0$ and $\mathbb{E}[T A_T - B_T]=TS_{0}/2$ for $r-q=0$.
We will show next that the maximum of the first term as $K$ increases
can sometimes become larger than the second term, and thus Rho$_C$ can become positive in 
a range of strikes. 

Let us study whether $T K \mathbb{P}(A_T>K)$ (from the first term on the right hand side in (\ref{RhoLowerBound})) is
ever larger than the second one. For the simple case $r-q=0$, it is equivalent to whether
\begin{equation}
K \mathbb{P}(A_T > K) > \frac12 S_0 ,
\end{equation}
for some value of $K$. Since the arithmetic mean is always larger than the geometric mean, we have
\begin{equation}
\mathbb{P}(A_T > K ) \geq \mathbb{P}(G_T > K),
\end{equation}
where
\begin{equation}
G_{T}:=\exp\left\{\frac{1}{T}\int_0^T \log S_t dt\right\}.
\end{equation}
The distribution of the geometric mean $G_T$ can be found exactly and $G_{T}$ can be represented as
\begin{equation}\label{Gresult}
G_T = S_0 \exp\left( \frac{1}{\sqrt3} \sigma \sqrt{T} Z - \frac14 \sigma^2 T + \frac12 (r-q) T \right)\,,
\end{equation}
where $Z \sim N(0,1)$ and the equality \eqref{Gresult} holds in distribution.
This gives
\begin{equation}
\mathbb{P}(G_T > K) = \int_{z_0(K)}^\infty \frac{e^{-\frac12 z^2}}{\sqrt{2\pi}}
dz = 1 - N(z_0(K)),
\end{equation}
with
\begin{equation}\label{Z0res}
z_0(K) := \frac{\sqrt3}{\sigma\sqrt{T}}
\left( \log \frac{K}{S_0} + \frac14 \sigma^2 T - \frac12 (r-q) T \right).
\end{equation}
This can be used in (\ref{RhoLowerBound}) to study numerically whether the 
lower bound ever becomes positive. Numerical study (see Remark \ref{NumRemark}) shows that this does happen
indeed, for certain values of the parameters so that
\begin{equation}
\sup_{K>0}K \mathbb{P}(A_T > K)\geq\sup_{K>0}K\mathbb{P}(G_{T}>K) > \frac12 S_0 .
\end{equation}
This concludes the proof of (ii). An alternative analytical approach that does not use any numerical computations
will be given in the Appendix~\ref{ctsAlt}.
\end{proof}

\begin{remark}\label{NumRemark}
As a numerical example, take $T=1$, $\sigma = 0.2$ and $r=q=0$. 
For this case the supremum of $K\mathbb{P}(G_T > K)$ over $K$ is $0.78 S_0$ and 
is reached at $K=0.82 S_0$. This is larger than $0.5$ so Rho$_C$ must become 
positive in a neighborhood of this value of the strike. Keeping $T=1$ and 
varying $\sigma$, this remains true for sufficiently small volatility 
$\sigma < 0.714$. If the volatility $\sigma$ is above $0.714$, then the 
supremum of $K\mathbb{P}(G_T>K)$ is below $0.5$.
\end{remark}

\begin{remark}
An alternative proof of positive $\text{Rho}_C$ in certain regions of the parameter
space can be given using the alternative formula for $\text{Rho}_C$ from the Malliavin calculus
with the analysis that relies on the large deviations theory \citep{LD,VarLD}. 
We sketch the main steps in the Appendix.
\end{remark}

\subsection{Zeros of $\text{Rho}_{\overline C}$}

In this section we study the position of the zeros of the short maturity
approximation $\text{Rho}_{\overline C}$ given in (\ref{RhoC}). 

\begin{proposition}
The set of the points in the $(K/S_0, (r-q)T)$ plane where 
$Rho_{\overline C}=0$ can be parameterized as
\begin{equation}\label{ZeroRho}
\frac{K}{S_0} = \kappa(v(K/S_0), R((r-q)T)) \frac{e^{(r-q)T}-1}{(r-q)T}\,,
\end{equation}
where $v(K/S_0) = \Sigma_{\rm LN}(K/S_0)\sqrt{T}$ and $R(x)$ is given in 
(\ref{Rdef}). The function $\kappa(v,R)$ is defined as the solution of the
equation
\begin{equation}\label{kappadef}
\frac{\kappa}{R} N\left(-\frac{1}{v} \log \kappa - \frac12 v\right) = 
N\left(- \frac{1}{v}\log \kappa + \frac12 v \right)\,,\quad R>0\,.
\end{equation}
\end{proposition}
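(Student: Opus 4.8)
The plan is to start from the explicit formula (\ref{RhoC}) for $\text{Rho}_{\overline{C}}$ and set it equal to zero, then unwind the definitions of $d_1,d_2$ to obtain (\ref{kappadef}). Writing $v = v(K/S_0) = \Sigma_{\rm LN}(K/S_0)\sqrt{T}$ and recalling
\begin{equation}
A(T) = S_0 \frac{e^{(r-q)T}-1}{(r-q)T}, \qquad
d_{1,2} = \frac{1}{v}\left(\log\frac{A(T)}{K} \pm \tfrac12 v^2\right),
\end{equation}
the condition $\text{Rho}_{\overline{C}}=0$ from (\ref{RhoC}) reads $A(T) N(d_1) R((r-q)T) = K N(d_2)$ (the overall factor $e^{-rT}T$ is never zero, so it can be divided out). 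The first step is to introduce the ratio $\kappa := K / A(T)$, so that $\log(A(T)/K) = -\log\kappa$ and hence $d_1 = -\tfrac{1}{v}\log\kappa + \tfrac12 v$, $d_2 = -\tfrac1v \log\kappa - \tfrac12 v$. Substituting and dividing both sides by $A(T)$ turns the zero condition into
\begin{equation}
R \cdot N\!\left(-\tfrac1v\log\kappa + \tfrac12 v\right) = \kappa \, N\!\left(-\tfrac1v\log\kappa - \tfrac12 v\right),
\end{equation}
with $R = R((r-q)T)$, which is exactly (\ref{kappadef}) after rearranging the factor of $R$ to the left-hand side. Reversing the substitution $\kappa = K/A(T) = (K/S_0)\cdot \frac{(r-q)T}{e^{(r-q)T}-1}$ gives precisely (\ref{ZeroRho}).

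The second step is to justify that (\ref{kappadef}) indeed has a solution $\kappa(v,R)$, i.e., that the implicit definition is meaningful. Here I would fix $v>0$ and $R>0$ and define
\begin{equation}
g(\kappa) := \frac{\kappa}{R}\, N\!\left(-\tfrac1v\log\kappa - \tfrac12 v\right) - N\!\left(-\tfrac1v\log\kappa + \tfrac12 v\right).
\end{equation}
As $\kappa \to 0^+$, both $N(\cdot)$ terms tend to $N(+\infty)=1$, but the first is multiplied by $\kappa/R \to 0$, so $g(\kappa) \to -1 < 0$. As $\kappa \to \infty$, the arguments of both $N(\cdot)$ tend to $-\infty$, so both $N$-terms tend to $0$; a short asymptotic comparison using the standard tail estimate $N(-x)\sim \frac{1}{x\sqrt{2\pi}}e^{-x^2/2}$ shows the first term dominates (it carries a factor $\kappa$ and, after comparing the Gaussian exponents, the exponential of $\tfrac12\cdot\tfrac{2\log\kappa}{v}\cdot v = \log\kappa$ cancels the relative decay), so $g(\kappa) \to +\infty > 0$. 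By continuity there is at least one zero. For well-definedness of $\kappa(v,R)$ as a genuine function one would want uniqueness; I expect this follows either from checking $g'(\kappa)>0$ on the relevant range, or by reformulating: dividing the defining equation by $N(d_2)$-type terms, the equation becomes $\frac{\kappa}{R} = N(d_1)/N(d_2)$ with $d_1 - d_2 = v$ fixed, and monotonicity of the ratio $N(d_1)/N(d_2)$ in $\log\kappa$ — which is essentially the log-concavity of the Gaussian c.d.f. — forces a unique crossing.

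The main obstacle I anticipate is precisely this uniqueness/monotonicity step: establishing that $g$ crosses zero exactly once (equivalently that the curve (\ref{ZeroRho}) is a genuine graph rather than a multivalued locus). The delicate point is that $\kappa$ appears both as a multiplicative prefactor and inside the Gaussian arguments, so $g'(\kappa)$ is not sign-definite termwise; one has to combine the Mills-ratio monotonicity with the prefactor. If full uniqueness on all of $\kappa>0$ is awkward, a cleaner route is to note that the physically relevant branch is the one continuous in $(v,R)$ through the known short-maturity behavior — e.g. matching $R\to \tfrac12$ as $(r-q)T\to 0$ from (\ref{Rdef}) — and invoke the implicit function theorem, whose non-degeneracy hypothesis $g'(\kappa)\neq 0$ at the solution is the condition to verify. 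Once (\ref{kappadef}) is shown solvable, the parameterization (\ref{ZeroRho}) is immediate from the algebra of the first step, since $v(K/S_0)$ and $R((r-q)T)$ depend on the plane coordinates in the stated way.
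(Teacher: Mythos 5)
Your algebraic reduction is exactly the paper's route: the paper's proof consists of observing that the equation $\text{Rho}_{\overline C}=0$ from (\ref{RhoC}) depends only on $\kappa = K/A(T)$ and $(r-q)T$, substituting $d_{1,2} = -\tfrac1v\log\kappa \pm \tfrac12 v$, and dividing by $e^{-rT}T A(T)$ to arrive at (\ref{kappadef}); your first step reproduces this correctly, and the reversal $K/S_0 = \kappa\,(e^{(r-q)T}-1)/((r-q)T)$ gives (\ref{ZeroRho}) as you say. Where you go beyond the paper is in trying to justify solvability, and the paper offers nothing here beyond the bare assertion that a unique solution exists for each $R>0$ --- so your instinct to supply an existence/uniqueness argument is sound, but your existence argument contains an error. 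As $\kappa\to\infty$ the function $g$ does \emph{not} tend to $+\infty$: from $d_1^2 - d_2^2 = -2\log\kappa$ one has the exact identity $\kappa\, e^{-d_2^2/2} = e^{-d_1^2/2}$, so the prefactor $\kappa$ precisely cancels the relative Gaussian decay (as your own parenthetical notes) and the two terms are asymptotically comparable rather than one dominating; the tail estimate gives $g(\kappa) \sim \left(\tfrac1R - 1\right) N\!\left(-\tfrac1v\log\kappa + \tfrac12 v\right) \to 0$. The intermediate-value argument therefore only closes if $R<1$, so that $g\to 0^+$; this does hold for the function in (\ref{Rdef}), since $R(x)\in(0,1)$ for all real $x$, but it needs to be stated and checked rather than obtained from a spurious divergence. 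Your uniqueness discussion remains a sketch, but since the paper itself asserts uniqueness without proof, that gap is shared with the original rather than introduced by you.
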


\begin{proof}
From (\ref{RhoC}) we have that the equation $Rho_{\overline C}=0$ depends
only on $K/A(T)$ and $(r-q)T$ and can be put into the form (\ref{kappadef})
with $K/S_0 \to \kappa$.
For any $R>0$ this equation has a unique solution for $\kappa$, which defines
the function $\kappa(v,R)$.
\end{proof}

Numerical evaluation shows that $\kappa(v,R)$ is an increasing function of 
$v \in (0,\infty)$.
This reaches the value $\lim_{v\to 0} \kappa(v,R) = R$ as $v\to 0$. 
The plot of this function for $R=1/2$ is given in Figure~\ref{Fig:1} (left).

\begin{figure}
    \centering
   \includegraphics[width=2.5in]{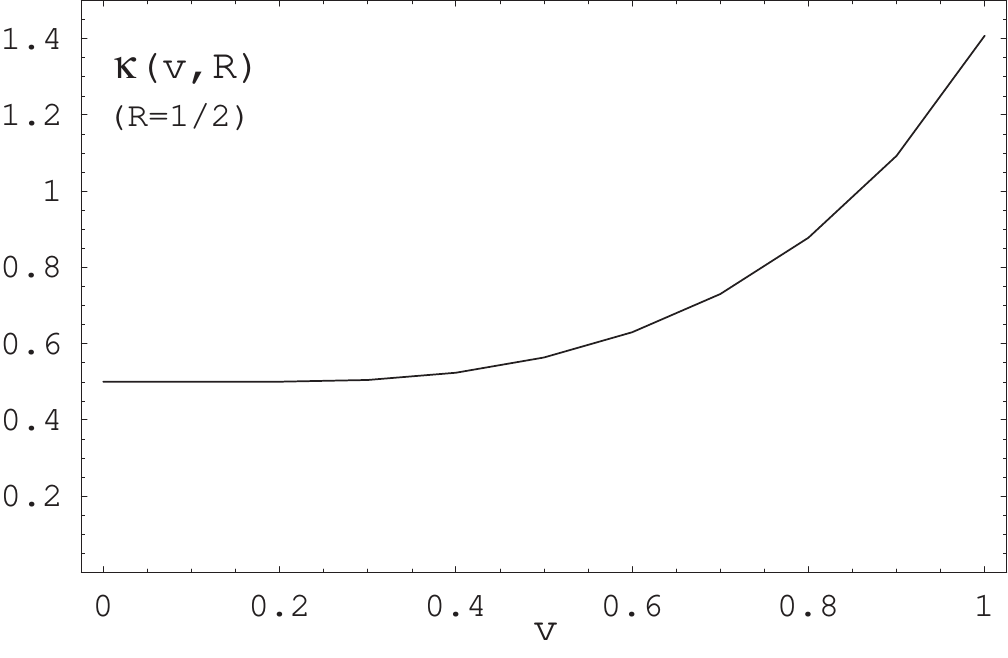}
   \includegraphics[width=2.6in]{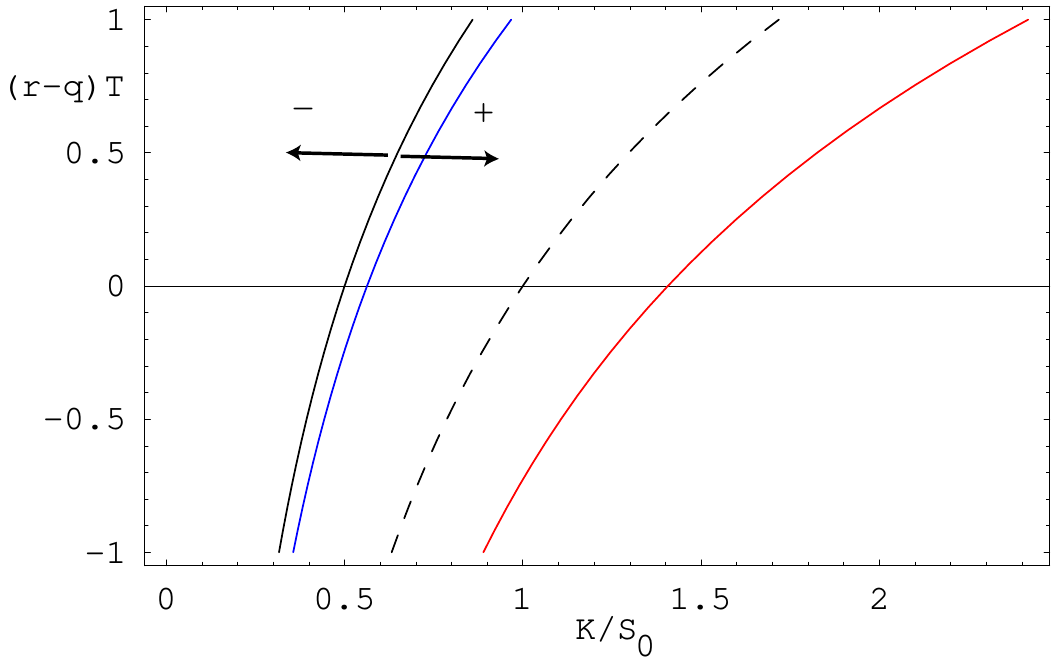}
    \caption{
Left: Plot of $\kappa(v,R)$ for $R=1/2$. 
Right: $\text{Rho}_{\overline C}=0$ curves in the $(K/S_0, (r-q)T)$ plane 
for several values of $w = \sigma\sqrt{T}/\sqrt{3}$:
$w=0$ (black), $w=0.5$ (blue) and $w=1$ (red).
The curves shown separate the regions of $\text{Rho}_{\overline C} <0$ (left) and 
$\text{Rho}_{\overline C} >0$ (right). The dashed curve shows the ATM point $K=A(T)$.}
\label{Fig:1}
 \end{figure}

Recall that $\Sigma_{\rm LN}(K/S_0)$ is linear in $\sigma$. Neglecting
the $K/S_0$ dependence in this function, one can set to a good approximation
$\Sigma_{\rm LN}(K/S_0)\sim \sigma/\sqrt{3}$. Then (\ref{ZeroRho}) gives 
the curves shown in Figure~\ref{Fig:1} (right) for several values of
the parameter $w = \sigma\sqrt{T}/\sqrt{3}$ equal to $0, 0.5, 1$. 
From this figure one can see that Asian call options with strikes around the
ATM region $K\sim A(T)$
have positive Rho, unless the unannualized volatility $w$ takes very large values,
above $\sim 100\%$. As the volatility increases, Rho decreases to zero, 
and may become even negative for extremely large volatility.

\subsection{Rho and Psi for Floating-strike Asian options}\label{ctsFloating}

Next, let us consider the floating-strike Asian options.
The prices of the floating-strike Asian options are given by
\begin{eqnarray}\label{floatC}
C_f(\kappa,T,\sigma,r,q) &=& e^{-rT}\mathbb{E}\left[\left(\kappa S_{T}-\frac{1}{T}
     \int_{0}^{T}S_{t}dt\right)^{+}\right]\,,\\
\label{floatP}
P_f(\kappa,T,\sigma,r,q) &=& e^{-rT}\mathbb{E}\left[\left(\frac{1}{T}
     \int_{0}^{T}S_{t}dt-\kappa S_{T}\right)^{+}\right],
\end{eqnarray}
where $\kappa>0$ is a constant, called the strike of the option.

\begin{proposition}\label{prop:2}
In the Black-Scholes model, we have the following properties for the
Rho and Psi of floating-strike Asian options:

(i) The Rho for the floating-strike Asian put options is always negative.

(ii) The Rho for the floating-strike Asian call options is always positive.

(iii) The Psi for the floating-strike Asian put options is sometimes positive
and sometimes negative.

(iv) The Psi for the floating-strike Asian call options is always negative.
\end{proposition}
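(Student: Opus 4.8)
The plan is to reduce everything to the fixed-strike results via the Henderson--Wong equivalence \citep{HW}, $C_f(\kappa,T,\sigma,r,q)=P(\kappa S_0,T,\sigma,q,r)$ and $P_f(\kappa,T,\sigma,r,q)=C(\kappa S_0,T,\sigma,q,r)$, being careful about the fact that this swaps the roles of $r$ and $q$. Under this swap, a derivative of a floating-strike price in $r$ corresponds to a derivative of the associated fixed-strike price in its \emph{second rate slot}, i.e. in $q$, and vice versa. So: $\mathrm{Rho}_{C,f}(\kappa,T,\sigma,r,q) = \partial_r P(\kappa S_0,T,\sigma,q,r)$, which is the \emph{Psi} of a fixed-strike put (its derivative in the last argument), and by Proposition~\ref{prop:1}(iii) this is always positive, giving (ii). Similarly $\mathrm{Rho}_{P,f}(\kappa,T,\sigma,r,q)=\partial_r C(\kappa S_0,T,\sigma,q,r)$ is the Psi of a fixed-strike call, which by Proposition~\ref{prop:1}(iv) is always negative, giving (i). For Psi, $\mathrm{Psi}_{P,f}=\partial_q C(\kappa S_0,T,\sigma,q,r)$ which is the \emph{Rho} of a fixed-strike call --- and by Proposition~\ref{prop:1}(ii) that can be of either sign, giving (iii); while $\mathrm{Psi}_{C,f}=\partial_q P(\kappa S_0,T,\sigma,q,r)$ is the Rho of a fixed-strike put, always negative by Proposition~\ref{prop:1}(i), giving (iv).

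The one subtlety to nail down is that the equivalence relation is stated with $S_0$ held fixed on both sides (the strike on the fixed-strike side is $\kappa S_0$), so when we differentiate in $r$ or $q$ on the floating-strike side, the quantity $\kappa S_0$ on the right-hand side does not move (neither $\kappa$, $S_0$, nor the differentiation variable affects it). Hence the chain rule is trivial: $\partial_r [P(\kappa S_0,T,\sigma,q,r)] = (\partial_5 P)(\kappa S_0,T,\sigma,q,r)$, exactly the sensitivity of the fixed-strike put with respect to its own fifth (second-rate) argument, evaluated at strike $\kappa S_0$. Since Proposition~\ref{prop:1}'s sign statements hold for \emph{all} admissible parameter values (all $S_0>0$, all strikes, all $\sigma,r,q$), they apply in particular at strike $\kappa S_0$ and with the rate arguments interchanged, so the signs transfer verbatim. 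For part (iii), the fact that $\mathrm{Rho}$ of the fixed-strike call takes both signs was established in Proposition~\ref{prop:1}(ii) over a two-parameter family; we just note that the construction there (e.g.\ the explicit $\sigma=0$, $S_0>K>S_0/2$ example giving a positive value, and large-$r$ giving a negative value) survives relabeling $(r,q)\mapsto(q,r)$ and rescaling the strike to $\kappa S_0$, so both signs are attained.

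I do not anticipate a serious obstacle here --- the proposition is essentially a corollary of Proposition~\ref{prop:1} plus the Henderson--Wong symmetry. The only place to be careful is bookkeeping: keeping straight which of the two rate arguments is being differentiated after the swap, and confirming that ``Psi'' on one side really is ``Rho'' of the dual instrument and conversely. Once that dictionary is written out cleanly, each of (i)--(iv) is a one-line deduction. I would present the proof as: (a) restate the equivalence and differentiate both sides in $r$ (resp.\ $q$) with $\kappa S_0$ fixed; (b) read off the four identities $\mathrm{Rho}_{C,f}=\mathrm{Psi}^{\mathrm{fix}}_P$, $\mathrm{Rho}_{P,f}=\mathrm{Psi}^{\mathrm{fix}}_C$, $\mathrm{Psi}_{C,f}=\mathrm{Rho}^{\mathrm{fix}}_P$, $\mathrm{Psi}_{P,f}=\mathrm{Rho}^{\mathrm{fix}}_C$; (c) invoke the corresponding items of Proposition~\ref{prop:1} to conclude.
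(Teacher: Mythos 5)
Your proposal is correct and follows essentially the same route as the paper: the paper's proof also derives all four statements from the Henderson--Wojakowski fixed/floating symmetry via exactly the dictionary $\mathrm{Rho}_{C,f}=\Psi_P$, $\Psi_{P,f}=\mathrm{Rho}_C$ (with $r$ and $q$ interchanged and strike $\kappa S_0$), invoking the corresponding items of Proposition~\ref{prop:1}. The only difference is that the paper additionally gives a direct monotonicity argument for (i) and (ii) from the explicit representations \eqref{CfBS}--\eqref{PfBS} (where $r$ enters only through $e^{r(t-T)}$ with $t\le T$), which your symmetry argument subsumes.
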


\begin{proof}
It is easy to observe that
\begin{equation}
C_f=\mathbb{E}\left[\left(\kappa S_{0}e^{-qT-\frac{1}{2}\sigma^{2}T+\sigma W_{T}}
-\frac{1}{T}\int_{0}^{T}S_{0}e^{r(t-T)-qt-\frac{1}{2}\sigma^{2}t+\sigma W_{t}}dt\right)^{+}\right],
\end{equation}
and
\begin{equation}
P_f=e^{-rT}\mathbb{E}\left[\left(\frac{1}{T}\int_{0}^{T}S_{0}e^{r(t-T)-qt-\frac{1}{2}\sigma^{2}t+\sigma W_{t}}dt
-\kappa S_{0}e^{-qT-\frac{1}{2}\sigma^{2}T+\sigma W_{T}}\right)^{+}\right].
\end{equation}
Hence, we conclude that Rho is positive for the floating-strike Asian
call option, and 
negative for the floating-strike Asian put option. This proves (i) and (ii).

A more direct proof can be given using the equivalence relation 
between fixed-strike and floating-strike Asian options in the BS model noted
in \cite{HW}. It was shown in this paper that the prices of floating-strike
Asian options can be expressed as
\begin{align}
&C_f(\kappa, T,\sigma,r,q)=e^{-qT} \mathbb{E}^*[(\kappa S_0 - A_T)^+] = 
 P(\kappa S_0,T,\sigma,q,r)\,, \\
&P_f(\kappa, T,\sigma,r,q)=e^{-qT} \mathbb{E}^*[(A_T - \kappa S_0)^+] = 
 C(\kappa S_0, T,\sigma, q, r)\,,
\end{align}
where the expectations $\mathbb{E}^*$ on the right-hand side are taken with respect to a 
measure where the asset price follows the diffusion 
$dS_t^* = (q-r) S_t^* dt + \sigma S_t^* dW_t^*$.
This relates the prices of call/put floating-strike Asian options to those
of put/call fixed-strike Asian options on the asset $S_t^*$. We also have the
relation among Greeks of these options: for floating-strike call Asian
option 
\begin{equation}\label{RhoPsi}
\text{Rho}_{f,C}(\kappa;r,q) = \Psi_P(\kappa;q,r)\,,\qquad
\Psi_{f,P}(\kappa;r,q) = \text{Rho}_C(\kappa;q,r)\,,
\end{equation}
where the subscripts on the right side refer to put/call fixed-strike Asian
options, respectively. Using these relations, the results of 
Proposition~\ref{prop:2}
follow from (iii) and (iv) of Proposition~\ref{prop:1}.

In a similar way, (i) and (ii) of Proposition~\ref{prop:1} give the results
(iii) and (iv) of Proposition~\ref{prop:2}.
\end{proof}


\appendix

\section{An alternative approach using Malliavin calculus and large deviations}\label{ctsAlt}

Using Malliavin calculus methods, \cite{BP} showed that the
Rho of an Asian call option in the BS model has an alternative expression 
that is given by
\begin{equation}
\text{Rho}_C = e^{-rT}
\mathbb{E}\left[\left(\frac{1}{T}\int_{0}^{T}S_{t}dt-K\right)^{+}\left(\frac{W_{T}}{\sigma}-T\right)\right]
=J_{1}-J_{2},
\end{equation}
where
\begin{align*}
&J_{1}:=e^{-rT}\mathbb{E}\left[\left(\frac{1}{T}\int_{0}^{T}S_{t}dt-K\right)^{+}\frac{W_{T}}{\sigma}\right],
\\
&J_{2}:=e^{-rT}\mathbb{E}\left[\left(\frac{1}{T}\int_{0}^{T}S_{t}dt-K\right)^{+}T\right].
\end{align*}
We can compute that
\begin{align*}
\mathbb{E}\left[\left(\frac{1}{T}\int_{0}^{T}S_{t}dt-K\right)^{+}\right]
&=\mathbb{E}\left[\left(\frac{1}{T}\int_{0}^{T}S_{t}dt-K\right)\right]
+\mathbb{E}\left[\left(K-\frac{1}{T}\int_{0}^{T}S_{t}dt\right)^{+}\right]
\\
&=\frac{1}{T}\int_{0}^{T}S_{0}e^{(r-q)t}dt-K
+\mathbb{E}\left[\left(K-\frac{1}{T}\int_{0}^{T}S_{t}dt\right)^{+}\right].
\end{align*}
Assume that the Asian call option is in-the-money, i.e.
\begin{equation}\label{ITMAssump}
\frac{1}{T}\int_{0}^{T}S_{0}e^{(r-q)t}dt-K>0.
\end{equation}
Then, as $\sigma\rightarrow 0$, by bounded convergence theorem, 
\begin{equation}
\mathbb{E}\left[\left(K-\frac{1}{T}\int_{0}^{T}S_{t}dt\right)^{+}\right]\rightarrow 0.
\end{equation}
Hence,
\begin{equation}
\lim_{\sigma\rightarrow 0}J_{2}=e^{-rT}\left(\frac{1}{T}\int_{0}^{T}S_{0}e^{(r-q)t}dt-K\right)T.
\end{equation}

On the other hand,
\begin{align*}
\mathbb{E}\left[\left(\frac{1}{T}\int_{0}^{T}S_{t}dt-K\right)^{+}W_{T}\right]
&=\mathbb{E}\left[\left(\frac{1}{T}\int_{0}^{T}S_{t}dt-K\right)W_{T}\right]
+\mathbb{E}\left[\left(K-\frac{1}{T}\int_{0}^{T}S_{t}dt\right)^{+}W_{T}\right]
\\
&=\mathbb{E}\left[\frac{1}{T}\int_{0}^{T}S_{t}dt\cdot W_{T}\right]
+\mathbb{E}\left[\left(K-\frac{1}{T}\int_{0}^{T}S_{t}dt\right)^{+}W_{T}\right].
\end{align*}
The second term is bounded by
\begin{align*}
&\left|\mathbb{E}\left[\left(K-\frac{1}{T}\int_{0}^{T}S_{t}dt\right)^{+}W_{T}\right]\right|
\\
&\leq\left(\mathbb{E}\left[\left(K-\frac{1}{T}\int_{0}^{T}S_{t}dt\right)^{2}1_{K>\frac{1}{T}\int_{0}^{T}S_{t}dt}\right]\right)^{1/2}
\left(\mathbb{E}[W_{T}^{2}]\right)^{1/2}
\\
&\leq
\left(\mathbb{E}\left[\left(K-\frac{1}{T}\int_{0}^{T}S_{t}dt\right)^{4}\right]\right)^{1/4}
\mathbb{P}\left(K>\frac{1}{T}\int_{0}^{T}S_{t}dt\right)^{1/4}
\left(\mathbb{E}[W_{T}^{2}]\right)^{1/2}.
\end{align*}

Notice that we have
\begin{equation}
\mathbb{P}\left(K>\frac{1}{T}\int_{0}^{T}S_{t}dt\right)
=
\mathbb{P}\left(K>\frac{1}{T}\int_{0}^{T}S_{0}e^{(r-q)t-\frac{1}{2}\sigma^{2}t+\sigma W_{t}}dt\right),
\end{equation}
for any $\sigma$ sufficiently small.

From the assumption in \eqref{ITMAssump}, for sufficiently small $\epsilon>0$,
\begin{equation}
K<\frac{1}{T}\int_{0}^{T}S_{0}e^{(r-q)t-\epsilon t}dt.
\end{equation}

Notice that we have
\begin{align*}
\mathbb{P}\left(K>\frac{1}{T}\int_{0}^{T}S_{t}dt\right)
&=
\mathbb{P}\left(K>\frac{1}{T}\int_{0}^{T}S_{0}e^{(r-q)t-\frac{1}{2}\sigma^{2}t+\sigma W_{t}}dt\right)
\\
&\leq
\mathbb{P}\left(K>\frac{1}{T}\int_{0}^{T}S_{0}e^{(r-q)t-\epsilon t+\sigma W_{t}}dt\right)\,,
\end{align*}
for any $\sigma$ sufficiently small.

As $\sigma\rightarrow 0$, 
Schilder's theorem from large deviations theory (see e.g. \cite{LD,VarLD})
says that as $\sigma\rightarrow 0$, $\mathbb{P}(\sigma W_{t}\in\cdot)$ satisfies
a large deviation principle on $C_{0}[0,T]$ (the space of continuous functions starting at $0$
at time $0$ equipped with uniform topology) with speed $1/\sigma^{2}$ and the good rate function
\begin{equation}
\frac{1}{2}\int_{0}^{T}[f'(t)]^{2}dt
\end{equation}
if $f\in\mathcal{AC}_{0}[0,T]$ (the space of absolutely continuous functions starting
at $0$ at time $0$) and $+\infty$ otherwise.
Therefore, by contraction principle (see e.g. \cite{LD,VarLD})
\begin{align*}
&\lim_{\sigma\rightarrow 0}\sigma^{2}
\log\mathbb{P}\left(K>\frac{1}{T}\int_{0}^{T}S_{0}e^{(r-q)t-\epsilon t+\sigma W_{t}}dt\right)
\\
&=-\inf_{f\in\mathcal{AC}_{0}[0,T]:K>\frac{1}{T}\int_{0}^{T}S_{0}e^{(r-q)t-\epsilon t+f(t)}dt}
\frac{1}{2}\int_{0}^{T}[f'(t)]^{2}dt.
\end{align*}
Hence, we conclude that, as $\sigma\rightarrow 0$,
\begin{equation}
\mathbb{P}\left(K>\frac{1}{T}\int_{0}^{T}S_{t}dt\right)=e^{-O(\frac{1}{\sigma^{2}})}.
\end{equation}

Next, let us compute 
\begin{equation}
\mathbb{E}\left[\frac{1}{T}\int_{0}^{T}S_{t}dt\cdot W_{T}\right]
=\frac{1}{T}\mathbb{E}\left[\int_{0}^{T}S_{0}e^{(r-q)t-\frac{1}{2}\sigma^{2}t+\sigma W_{t}}dt\cdot W_{T}\right].
\end{equation}
We can expand as $\sigma\rightarrow 0$ to get
\begin{align*}
\mathbb{E}\left[\frac{1}{T}\int_{0}^{T}S_{t}dt\cdot W_{T}\right]
&=\frac{1}{T}\mathbb{E}\left[\int_{0}^{T}S_{0}e^{(r-q)t}dt\cdot W_{T}\right]
+\sigma\frac{1}{T}\mathbb{E}\left[\int_{0}^{T}S_{0}e^{(r-q)t}W_{t}W_{T}dt\right]+O(\sigma^{2})
\\
&=\sigma\frac{1}{T}\int_{0}^{T}S_{0}e^{(r-q)t}tdt+O(\sigma^{2}).
\end{align*}

Hence, we conclude that
\begin{equation}
\lim_{\sigma\rightarrow 0}J_{1}=e^{-rT}\frac{1}{T}\int_{0}^{T}S_{0}e^{(r-q)t}tdt.
\end{equation}

If $K$ is chosen so that \eqref{ITMAssump}
is satisfied and at the same time the quantity
\begin{equation}
\left(\frac{1}{T}\int_{0}^{T}S_{0}e^{(r-q)t}dt-K\right)
\end{equation}
is sufficiently small. Then for sufficiently small volatility $\sigma$, 
we must have $J_{1}>J_{2}$, and thus Rho$_C$ is positive for the choice of such parameters.

\section*{Acknowledgements}
The authors are grateful to an anonymous referee and the editor for helpful
comments and suggestions that have greatly improved the quality of the paper.
Lingjiong Zhu is grateful to the support from NSF Grant DMS-1613164.
The views and opinions expressed in this article are those of the authors, and do
not necessarily represent those of authors' employers.


\end{document}